\begin{document}

\newtheorem{theorem}{Theorem}[section]
\newtheorem{lemma}[theorem]{Lemma}
\newtheorem{proposition}[theorem]{Proposition}
\newtheorem{corollary}[theorem]{Corollary}

\newenvironment{proof}[1][Proof]{\begin{trivlist}
\item[\hskip \labelsep {\bfseries #1}]}{\end{trivlist}}
\newenvironment{definition}[1][Definition]{\begin{trivlist}
\item[\hskip \labelsep {\bfseries #1}]}{\end{trivlist}}
\newenvironment{example}[1][Example]{\begin{trivlist}
\item[\hskip \labelsep {\bfseries #1}]}{\end{trivlist}}
\newenvironment{remark}[1][Remark]{\begin{trivlist}
\item[\hskip \labelsep {\bfseries #1}]}{\end{trivlist}}

\newcommand{\qed}{\nobreak \ifvmode \relax \else
      \ifdim\lastskip<1.5em \hskip-\lastskip
      \hskip1.5em plus0em minus0.5em \fi \nobreak
      \vrule height0.75em width0.5em depth0.25em\fi}

\date{\today}

\title{Entropy in the Kuramoto model and its implications for the stability of partially synchronized states}

\author{Anders Nordenfelt}

\affiliation{Departamento de F\'{i}sica, Universidad Rey Juan Carlos, Tulip\'{a}n s/n,  28933 M\'{o}stoles, Madrid, Spain}

\begin{abstract}
We discuss the concept of entropy applied to the infinite-N Kuramoto model and derive an expression for its time derivative. The time derivative of the entropy functional is shown to depend on the synchronization order parameter in a very simple way and, absent diffusion, it is never increasing. The implications of this for the stability of partially synchronized states is discussed. We conclude with a section on the entropy of the marginal density function averaged over all natural frequencies.
\end{abstract}

\maketitle

\section{introduction}

Since it was first introduced in the 1970s, the Kuramoto model \cite{Kuramoto1984, Strogatz2000} has been a continuous source of inspiration for developments in various fields of nonlinear dynamics. The initial motivation behind the model was to elucidate the phenomenon of collective synchronization observed on many occasions in nature, for example the synchronous chorusing of crickets. As stated in its original form, the model consists of $N$ phase variables $\theta_i$ that are coupled sinusoidally all-to-all according to the following equation:
\begin{equation}\label{original}
\dot{\theta}_i(t) = \omega_i + \frac{K}{N}\sum_{j=1}^{N}\sin[\theta_j(t) - \theta_i(t)].
\end{equation}    
The natural frequencies $\omega_i$ of each oscillator are drawn from some distribution $g(\omega)$ that is usually assumed to be symmetric around zero. In order to quantify the rate of synchronization the order parameter $r$ is introduced, which in the case a finite system is given implicitly by the equation
\begin{equation}\label{finitesynch}
re^{i\psi} = \frac{1}{N}\sum_{j=1}^{N} e^{i\theta_j(t)}.
\end{equation}
The state variable $\psi$ occuring on the left hand side of Eq. ($\ref{finitesynch}$) could be seen as the collective phase of the entire ensemble of oscillators. With this definition it is possible to rewrite the system of equations ($\ref{original}$) in the following form
\begin{equation}\label{original2}
\dot{\theta}_i(t) = \omega_i + Kr\sin[\psi - \theta_i(t)].
\end{equation} 
One of the main results of Kuramoto's early analysis was the derivation of a formula for the coupling strength $K_c$ critical for the onset of collective synchronization:
\begin{equation}
K_c = \frac{2}{\pi g(0)}.
\end{equation}
Furthermore, if we assume a Lorenzian shape of the frequency distribution $g(\omega) = \frac{\gamma}{\pi(\gamma^2 + \omega^2)}$ the calculations can be continued to obtain also a formula for the resulting order parameter
\begin{equation}
r = \sqrt{1 - \frac{K_c}{K}}.
\end{equation}     
Subsequently an infinite-N version of the Kuramoto model was formulated by Sakaguchi \cite{Sakaguchi1988} and independently by Mirollo and Strogatz \cite{Mirollo1991, Strogatz2000}. In this case the individual oscillators are replaced with a density function $\rho(\theta, t, \omega)$ representing the relative amount of oscillators with natural frequency around $\omega$ having the phase $\theta$ at time $t$. Now, the density function $\rho$ is governed by the continuity equation
\begin{equation}\label{fundamental0}
\frac{\partial \rho}{\partial t} = - \frac{\partial}{\partial \theta}(\rho \nu),
\end{equation}
where the velocities $\nu$ are given by
\begin{equation}\label{velocity}
\nu(\theta, t, \omega) = \omega + Kr\sin(\psi - \theta).  
\end{equation}
In the infinite-N model the synchronization order parameter is given by the integral
\begin{eqnarray}\label{synch}
re^{i\psi} = \int_{-\pi}^{\pi} \int_{-\infty}^{\infty} e^{i\theta} \rho(\theta,t,\omega)g(\omega)\textrm{d}\omega\textrm{d}\theta.
\end{eqnarray}
There are some immediate remarks we can make about equation ($\ref{fundamental0}$) which we state without proof. First of all, the number of oscillators of each frequency is preserved, more specifically: 
\begin{equation}\label{normcons}
\int_{-\pi}^{\pi} \dot{\rho} \textrm{d}\theta = 0
\end{equation}
Secondly, if $\rho$ is initially non-negative everywhere then it will remain non-negative everywhere. Once the infinite-N model has been formulated the question of the system's behaviour around the critical coupling strength can now be studied in another setting. Of particular importance is the uniform incoherent state given by
\begin{equation}
\rho_0(\theta, \omega) \equiv \frac{1}{2\pi}.
\end{equation}  
The question asked was how, for various coupling strengths $K$, the uniform incoherent state would behave under small perturbations. The linear stability analysis carried out by Strogatz and Mirollo  \cite{Mirollo1991, Strogatz2000} corroborated the previous findings that for $K > K_c$ the uniform incoherent state is unstable. However, at the same time it was found that for $K_c < K$ the uniform incoherent state does not go from unstable to asymptotically stable, as one might expect, but instead becomes \emph{neutrally} stable. Similar results were later obtained also for partially synchronized states \cite{Mirollo2007, Mirollo2012}. One of the aims of this paper is to reproduce some variants of these findings by means of the entropy functional, which will be presented in the following section.

\section{Entropy of the density function}\label{sec_entropy}

As pointed out by many people before, the uniform incoherent state is not the only state with zero synchronization order parameter $r$. One of the questions leading up to the material of this paper was: Is there another framework in which the uniform incoherent state is unique? An answer to that question is the Entropy functional $S[\rho]$: 
\begin{equation}\label{entropy}
S[\rho] = -\int \rho \ln(\rho).
\end{equation}
The entropy formula ($\ref{entropy}$) occurs in classical thermodynamics under the name Gibb's entropy, but the same expression can also be found in information theory under the name Shannon entropy. Our main purpose here is not to give the entropy any paticular interpretation although, given the context, the thermodynamic perspective would perhaps be the most natural. Nor will we be worried about the entropy attaining negative values. The first we will do is to confirm that the constant density function indeed maximizes the entropy. This is done by the means of the Euler-Lagrange equations as follows: 
\begin{equation}\label{extremum}
\frac{\partial[\rho_0 \ln(\rho_0)]}{\partial \rho_0} = \ln(\rho_0) + 1 = 0 \Rightarrow \rho_0 \equiv \frac{1}{e}.
\end{equation}
We could have recovered the usual expression $\rho_0 \equiv \frac{1}{2\pi}$ by adding a Lagrange multiplier $\lambda \rho$ in the definition of the entropy, but that would have no effect its time derivative, which is our main concern us here. Thus, in order to avoid unnecessary proliferation of constant factors, for the continuation we will rest content by simply imposing the temporary normalization condition $\int \rho = \frac{2\pi}{e}$. For completeness, at this point we also add a diffusion term to the governing equation which then takes the form
\begin{equation}\label{fundamental}
\frac{\partial \rho}{\partial t} = D\frac{\partial^2 \rho}{\partial \theta^2} - \frac{\partial}{\partial \theta}(\rho \nu).
\end{equation}
To set the stage further, we assume that $\rho$ is a twice continuously differentiable (with respect to $\theta$) non-negative function on the unit circle. The unit circle is identical to the interval $[-\pi, \pi]$ where we have identified the two points $-\pi$ and $\pi$. The periodocity of all functions involved will be very important in the continuation. We define the total entropy as follows:
\begin{equation}\label{totentropy}
S[\rho] = \int_{-\infty}^{\infty} S_{\omega} g(\omega)\textrm{d}\omega
\end{equation}
where
\begin{equation}\label{omegaentropy}
S_{\omega} = -\int_{-\pi}^{\pi} \rho \ln(\rho) \textrm{d}\theta.
\end{equation}
We now proceed to calculate the time-derivative of the entropy:
\begin{equation}\label{step0}
\frac{\mathrm{d}S_{\omega}}{\mathrm{d}t} = -\frac{\mathrm{d}}{\mathrm{d}t}\int_{-\pi}^{\pi} \rho \ln(\rho) \textrm{d}\theta = -\int_{-\pi}^{\pi} \dot{\rho} [\ln(\rho) + 1] \textrm{d}\theta.
\end{equation}
Using Eqns. ($\ref{fundamental}$) and ($\ref{normcons}$) we arrive at
\begin{equation}\label{step1}
\frac{\mathrm{d}S_{\omega}}{\mathrm{d}t} = -\int_{-\pi}^{\pi} \left[D\rho'' - \frac{\partial}{\partial \theta}(\rho \nu)\right] \ln(\rho) \textrm{d}\theta.
\end{equation}\\
The following steps involve integration by parts and the periodicity of all functions repeatedly:
\begin{equation}\label{step2}
\frac{\mathrm{d}S_{\omega}}{\mathrm{d}t} = \int_{-\pi}^{\pi} \left[D\rho' - \rho \nu\right] \frac{\rho'}{\rho} \textrm{d}\theta,
\end{equation}
\begin{equation}\label{step3}
\frac{\mathrm{d}S_{\omega}}{\mathrm{d}t} = D\int_{-\pi}^{\pi} \frac{(\rho')^2}{\rho} \mathrm{d}\theta - \int_{-\pi}^{\pi} [\omega + Kr\sin(\psi - \theta)] \rho' \textrm{d}\theta,
\end{equation}
\begin{equation}\label{step4}
\frac{\mathrm{d}S_{\omega}}{\mathrm{d}t} = D\int_{-\pi}^{\pi} \frac{(\rho')^2}{\rho} \mathrm{d}\theta - \int_{-\pi}^{\pi} Kr\cos(\psi - \theta) \rho \textrm{d}\theta,
\end{equation}\\
Finally, we observe that
\begin{equation}\label{step5}
\cos(\psi - \theta) = \frac{1}{2}(e^{i\psi}e^{-i\theta} + e^{-i\psi}e^{i\theta})
\end{equation}\\
which together with Eq. (\ref{synch}) results in the following remarkably simple expression for the time evolution of the entropy
\begin{equation}\label{entropyderiv}
\frac{\mathrm{d}S[\rho]}{\mathrm{d}t} =  D\int_{-\infty}^{\infty} \int_{-\pi}^{\pi} \frac{(\rho')^2}{\rho} g(\omega) \mathrm{d}\theta \mathrm{d}\omega - Kr^2.
\end{equation}
A similar equation was obtained also in a recent article by Benedetto et al. \cite{Benedetto2014}. We note that, absent diffusion and with positive coupling ($D = 0$, $K > 0$), the entropy is never increasing, which is noteworthy in its own right. Moreover, the distribution of natural frequencies plays no role for the time evolution of the entropy. However, in Sec. $\ref{sec_mean}$ we will see that the natural frequencies play an important role when we consider instead the entropy of the marginal density function averaged over all natural frequencies. 

\section{Implications for the stability of partially synchronized states}

We assume that we have equipped the function space under consideration with some norm $||\rho||$. We now make the following definition:
\begin{definition}
A state $\rho$ is asymptotically stable if there exists an $\epsilon > 0$ such that for any $\rho_1$ satisfying $||\rho_1 - \rho|| < \epsilon$ we have that $||\rho_1 - \rho|| \to 0$ as $t \to \infty$. 
\end{definition}
When considering which norm to choose there is one candidate that comes to mind naturally, namely the $L_1$-norm:
\begin{equation}\label{l1}
||\rho||_{L_1} = \int |\rho|.
\end{equation}
This norm is important for at least for two reasons. First of all, with the $L_1$ norm the governing equation ($\ref{fundamental}$) is norm preserving. Secondly, if $||\rho_1 - \rho||_{L_1} \to 0$ as $t \to \infty$ then the order parameter of $\rho_1$ converges to that of $\rho$. The back-drop, however, is that this norm is too weak to rigorously prove the non-stability conditions by means of the entropy. Therefore, we have reason also to consider the supremum norm:
\begin{equation}\label{supremum}
||\rho||_{\infty} = \textrm{max}|\rho|.
\end{equation}
In the following we will assume the supremum norm unless stated otherwise. We need the following lemmas:
\begin{lemma}\label{neigh}
In any open neighborhood of a state $\rho$ that is either partially synchronized ($0 < r < 1$) or identical to the uniform incoherent state $\rho_0$ there exists a state $\rho_1$ such that $S[\rho_1] < S[\rho]$.
\end{lemma}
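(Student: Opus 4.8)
The plan is to exhibit the lower-entropy state explicitly as a small perturbation $\rho_1 = \rho + \varepsilon\,\eta$, where $\eta = \eta(\theta,\omega)$ is smooth and $2\pi$-periodic in $\theta$, has zero mean in $\theta$ at each frequency ($\int_{-\pi}^{\pi}\eta\,\mathrm{d}\theta = 0$, so that $\rho_1$ keeps the per-frequency normalization), and $\varepsilon > 0$ is small enough that $\rho_1$ remains non-negative and $\|\rho_1 - \rho\|_\infty = \varepsilon\,\|\eta\|_\infty < \epsilon$. Since $S[\rho] = \int S_\omega\, g(\omega)\,\mathrm{d}\omega$, it suffices to drive $S_\omega$ down on a positive-$g$-measure set of frequencies, so everything reduces to the sign of $S_\omega[\rho_1] - S_\omega[\rho]$ for small $\varepsilon$. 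I would read this off from the Taylor expansion of the integrand $f(x) = -x\ln x$, using $f'(x) = -\ln x - 1$ and $f''(x) = -1/x$: after the mean-zero cancellation the first variation is $\delta S_\omega = -\int_{-\pi}^{\pi}\eta\,\ln\rho\,\mathrm{d}\theta$ and the quadratic term is $-\tfrac{1}{2}\varepsilon^2\int_{-\pi}^{\pi}\eta^2/\rho\,\mathrm{d}\theta$, which is strictly negative. The two cases of the statement are distinguished by whether the first variation can be made nonzero.

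For the uniform incoherent state $\rho = \rho_0$ the factor $\ln\rho_0$ is constant, so $\delta S_\omega$ vanishes for every admissible $\eta$ and the sign of $S_\omega[\rho_1] - S_\omega[\rho_0]$ is governed by the strictly negative quadratic term. Hence any non-constant $\rho_1$ in the neighborhood has strictly smaller entropy. This is just the infinitesimal expression of the strict concavity of $f$, i.e. of the fact noted in Sec.~\ref{sec_entropy} that $\rho_0$ is the \emph{unique} maximizer of $S$ under the normalization.

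For a partially synchronized state I would first argue that $r > 0$ forces the $\theta$-marginal to be non-constant. Writing $c(\omega) = \int_{-\pi}^{\pi} e^{i\theta}\rho\,\mathrm{d}\theta$, a marginal that is constant in $\theta$ gives $c(\omega) = 0$, so by Eq.~(\ref{synch}) a non-zero $r$ requires $c(\omega) \neq 0$, hence $\rho(\cdot,\omega)$ non-constant, on a positive-measure set of frequencies. On that set $\ln\rho$ is non-constant, so the first variation can be made strictly negative; the cleanest choice is $\eta \propto \ln\rho - \overline{\ln\rho}$ (with $\overline{\ln\rho}$ the $\theta$-average), which gives $\delta S_\omega = -\int_{-\pi}^{\pi}(\ln\rho - \overline{\ln\rho})^2\,\mathrm{d}\theta < 0$. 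For small $\varepsilon$ this first-order term dominates and $S_\omega[\rho_1] < S_\omega[\rho]$; integrating against $g$ over the set lowers the total entropy. Intuitively, $\eta$ transports mass from the sparsely to the densely populated parts of the circle, sharpening the profile and lowering its entropy.

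The main obstacle is positivity and regularity at points where $\rho$ is small or vanishes: there $\ln\rho$ is unbounded, the global choice above is inadmissible, and $\rho + \varepsilon\eta$ may fail to stay non-negative for any $\varepsilon$. I would circumvent this by localizing. Since $\rho(\cdot,\omega)$ is continuous and non-constant on the relevant frequency set, there are points $\theta_a,\theta_b$ with $\rho(\theta_a) > \rho(\theta_b) > 0$, and I would take $\eta$ to be a smooth mean-zero function supported in small $\theta$- and $\omega$-neighborhoods of these points (positive near $\theta_a$, negative near $\theta_b$), on which $\rho$ is bounded away from zero. This preserves positivity and the mean-zero constraint, keeps $\delta S_\omega \approx -[\ln\rho(\theta_a) - \ln\rho(\theta_b)]\cdot(\text{transported mass}) < 0$, and makes the sup-norm bound immediate. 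The only remaining work is the routine uniform estimate of the Taylor remainder guaranteeing that the leading term controls the sign for all sufficiently small $\varepsilon$.
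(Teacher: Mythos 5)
Your proof is correct, but it takes a genuinely different route from the paper's. The paper's argument is dynamical: it lets $\rho$ evolve under Eq.~(\ref{fundamental}) with $D=0$, $K>0$ for an arbitrarily short time, invokes $\mathrm{d}S/\mathrm{d}t=-Kr^2<0$ from Eq.~(\ref{entropyderiv}) to get strictly lower entropy, and notes that the propagated state remains in the neighborhood for a short enough interval; the case $\rho=\rho_0$ is dispatched by citing uniqueness of the entropy maximizer. You instead construct the lower-entropy state by hand as a static perturbation $\rho+\varepsilon\eta$ with $\eta$ mean-zero in $\theta$, and read off the sign from the expansion of $-x\ln x$: the vanishing first variation plus strictly negative second variation at $\rho_0$, and a strictly negative first variation (mass moved from sparse to dense regions) at any state whose $\theta$-marginal is non-constant, which $r>0$ forces via Eq.~(\ref{synch}). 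Your localization near points $\theta_a,\theta_b$ with $\rho(\theta_a)>\rho(\theta_b)>0$ correctly sidesteps the unboundedness of $\ln\rho$ and the positivity constraint, and continuity of $\rho$ guarantees such points exist whenever $\rho(\cdot,\omega)$ is non-constant. What each approach buys: the paper's is shorter and reuses the already-derived entropy production identity, but it tacitly relies on short-time well-posedness and continuity of the flow in the supremum norm to keep the propagated state inside the neighborhood. Yours is entirely independent of the dynamics --- it is a statement about the entropy functional alone, resting only on strict concavity --- and it extends verbatim to non-uniform incoherent states ($r=0$, $\rho\neq\rho_0$), which the paper only remarks on without proof. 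The price is the extra bookkeeping you already identify (the uniform Taylor remainder estimate, and a mild regularity step in choosing the perturbation jointly in $\theta$ and $\omega$ on a positive-$g$-measure set of frequencies), none of which presents a real obstacle.
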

\begin{proof}
Proof synopsis: for $\rho = \rho_0$ this holds trivially since $\rho_0$ is the unique state with maximum entropy. For any other state, let $\rho$ propagate for an arbitrarily short time interval under the action of Eq. ($\ref{fundamental}$) with $K > 0$ and $D = 0$. According to Eq. ($\ref{entropyderiv}$) this new state has lower entropy than $\rho$ and can be made to stay within the neighborhood for short enough time interval.
\end{proof}

\begin{lemma}\label{key}
If $||\rho_1 - \rho||_{\infty} \to 0$ as $t \to \infty$ then $S[\rho_1] \to S[\rho]$ as $t \to \infty$.
\end{lemma}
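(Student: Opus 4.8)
The plan is to read Lemma \ref{key} as a statement of continuity of the entropy functional with respect to the supremum norm, and to prove it directly from the uniform continuity of the map $\phi(x) := -x\ln x$, with the convention $\phi(0) = 0$, on compact intervals. The three facts I would lean on are that $\phi$ is continuous on $[0,\infty)$, that the integration domain in $\theta$, namely the circle $[-\pi,\pi]$, is compact, and that $\int_{-\infty}^{\infty} g(\omega)\,\mathrm{d}\omega = 1$.

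First I would secure a uniform pointwise bound. Since the limiting state $\rho$ is a fixed twice continuously differentiable density on the compact circle, $M_0 := \|\rho\|_{\infty}$ is finite. Because $\|\rho_1 - \rho\|_{\infty} \to 0$, there is a time $T$ with $\|\rho_1 - \rho\|_{\infty} < 1$ for all $t > T$, whence $0 \le \rho_1(\theta,t,\omega) \le M_0 + 1 =: M$ for all $\theta$, $\omega$ and $t > T$. Thus for large $t$ both $\rho$ and $\rho_1$ take values in the compact interval $[0,M]$, on which $\phi$ is not merely continuous but uniformly continuous.

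Next I would exploit this uniform continuity. Fix $\varepsilon > 0$ and choose $\delta > 0$ so that $x,y \in [0,M]$ with $|x-y| < \delta$ implies $|\phi(x) - \phi(y)| < \varepsilon$. Taking $t$ large enough that $\|\rho_1 - \rho\|_{\infty} < \min(1,\delta)$ gives $|\phi(\rho_1(\theta,t,\omega)) - \phi(\rho(\theta,\omega))| < \varepsilon$ pointwise in $\theta$ and $\omega$. Integrating over the circle yields $|S_{\omega}[\rho_1] - S_{\omega}[\rho]| \le 2\pi\varepsilon$ uniformly in $\omega$, and then averaging against $g$ and using $\int g = 1$ gives $|S[\rho_1] - S[\rho]| \le 2\pi\varepsilon$. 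Since $\varepsilon$ is arbitrary, $S[\rho_1] \to S[\rho]$.

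The main obstacle, and the reason the supremum norm rather than the $L_1$ norm is required here, is the singular behaviour of $\phi$ at the origin: its derivative $\ln x + 1$ diverges as $x \to 0^+$, so $\phi$ admits no Lipschitz bound and one cannot control $|\phi(\rho_1) - \phi(\rho)|$ by a constant multiple of $|\rho_1 - \rho|$. Uniform convergence on the compact interval $[0,M]$ is exactly what circumvents this, since continuity on a compact set automatically upgrades to uniform continuity. A secondary point handled cleanly by the sup norm is the a priori boundedness of $\rho_1$: in the $L_1$ topology the perturbed density could develop thin tall spikes that leave the entropy uncontrolled even while $\|\rho_1 - \rho\|_{L_1} \to 0$, which is the precise sense in which that norm is \emph{too weak} for the present argument.
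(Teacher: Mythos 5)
Your proof is correct, and it takes a genuinely different (and arguably cleaner) route than the paper's. The paper splits the circle into the set $Y$ where $\rho > 1$ and the set $X$ where $\rho \leq 1$; on $Y$ it uses the Lipschitz bound $|\ln(y) - \ln(x)| < |y - x|$ valid for $x, y > 1$ together with an explicit estimate, and on $X$ it invokes pointwise convergence of $\rho_n\ln(\rho_n)$, the bound $1/e$ on $|x\ln(x)|$ for $x \in [0,1]$, and the Lebesgue Dominated Convergence Theorem. You instead treat the whole range at once: the map $\phi(x) = -x\ln(x)$, extended by $\phi(0)=0$, is continuous hence \emph{uniformly} continuous on the compact interval $[0,M]$ in which both $\rho$ and (eventually) $\rho_1$ take values, and uniform continuity converts the sup-norm convergence directly into a uniform pointwise bound on $|\phi(\rho_1)-\phi(\rho)|$, which integrates to $2\pi\varepsilon$. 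This avoids the case split and the measure-theoretic machinery entirely and gives an explicit quantitative estimate; it also makes transparent why the sup norm is needed, exactly as you say, since the failure of a Lipschitz bound for $\phi$ near the origin is what defeats an $L_1$ argument. The one point worth flagging in both your argument and the paper's is the implicit uniformity in $\omega$: you need $M_0 = \|\rho\|_\infty$ to be a bound over $\theta$ \emph{and} $\omega$ jointly (and the $2\pi\varepsilon$ bound on $S_\omega$ to hold uniformly in $\omega$) before averaging against $g$; this holds if the supremum norm in Eq.~(\ref{supremum}) is read as a supremum over both variables, which is the natural reading.
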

\begin{proof}
Let $Y$ denote the bounded subset of $[-\pi, \pi]$ where $\rho > 1$ and $X$ the bounded subset where $\rho \leq 1$. Pick a sequence $\rho_n$ converging to $\rho$ in the supremum norm. Due to the uniform convergence, eventually all functions $\rho_n$ will attain values greater than $1$ on the set $Y$. For $x,y > 1$ we have that $|y - x| > |\ln(y) - \ln(x)|$. Pick $n$ large enough so that $||\rho_n - \rho||_{\infty} < \epsilon$. Then
\begin{equation}
\begin{split}
|S[\rho_n] - S[\rho]|_{Y} = &|\int_{Y} \rho_n \ln(\rho_n) - \rho \ln(\rho)| = |\int_{Y} (\rho_n - \rho)\ln(\rho_n) - \rho[\ln(\rho) -\ln(\rho_n)] | \leq \\
& \qquad  \int_Y \epsilon\ln(\rho_n) +  \int_Y \epsilon\rho \leq  \int_Y \epsilon(\ln(\rho) + \epsilon) +  \int_Y \epsilon\rho \to 0
\end{split}
\end{equation}
For the set $X$: Due to the continuity of the function $x\ln(x)$ we have that $\rho_n\ln(\rho_n)$ converges to $\rho\ln(\rho)$ pointwise everywhere. Moreover, on the set $X$ the function $\rho_n\ln(\rho_n)$ is eventually bounded in magnitude by the constant $1/e$. These conditions together with the Lebesgue Dominated Covergence Theorem proves the lemma.
\end{proof}
We can now prove the main Theorem, (originally due to Mirollo and Strogatz), by means of the entropy:
\begin{theorem}
Absent diffusion ($D = 0$) and with positive coupling constant $K > 0$ neither $\rho_0$ nor any partially synchronized ($0 < r < 1$) state $\rho$ is asymptotically stable, in the sense of the supremum norm, under the action of Eq. ($\ref{fundamental}$).
\end{theorem}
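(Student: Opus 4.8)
The plan is to argue by contradiction, using the monotonicity of the entropy encoded in Eq.~(\ref{entropyderiv}) together with the two lemmas just established. Suppose that $\rho$---either $\rho_0$ or a partially synchronized state with $0 < r < 1$---were asymptotically stable in the supremum norm. Then there would exist an $\epsilon > 0$ such that every initial datum $\rho_1$ with $||\rho_1 - \rho||_{\infty} < \epsilon$ evolves so that $||\rho_1 - \rho||_{\infty} \to 0$ as $t \to \infty$.

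First I would invoke Lemma~\ref{neigh} with this particular $\epsilon$: inside the open $\epsilon$-ball around $\rho$ there is a state $\rho_1$ with $S[\rho_1] < S[\rho]$. Writing $\delta := S[\rho] - S[\rho_1] > 0$, I take $\rho_1$ as the initial condition and let it evolve under Eq.~(\ref{fundamental}) with $D = 0$, $K > 0$. Because the transport dynamics preserves non-negativity, smoothness, and the $L_1$-mass, the evolved $\rho_1(t)$ remains an admissible density to which Eq.~(\ref{entropyderiv}) applies, giving $\frac{\mathrm{d}}{\mathrm{d}t}S[\rho_1(t)] = -K\,r_1(t)^2 \le 0$, where $r_1(t)$ is the order parameter of $\rho_1$. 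Hence the entropy of the perturbed trajectory is non-increasing, so $S[\rho_1(t)] \le S[\rho_1(0)] = S[\rho] - \delta$ for all $t \ge 0$.

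Next I would bring in the assumed asymptotic stability: since $||\rho_1(t) - \rho||_{\infty} \to 0$, Lemma~\ref{key} yields $S[\rho_1(t)] \to S[\rho]$. This contradicts the uniform bound $S[\rho_1(t)] \le S[\rho] - \delta$ obtained in the previous step, and the contradiction shows that no such $\epsilon$ exists; therefore $\rho$ is not asymptotically stable. The same chain of reasoning disposes of both cases simultaneously, the only difference being the origin of the strictly lower-entropy competitor furnished by Lemma~\ref{neigh}---triviality of the maximum for $\rho_0$ versus short-time propagation for the synchronized state.

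The step I expect to demand the most care is the tacit premise that $S[\rho]$ is the correct limiting target, i.e. that the reference state is stationary so that $S[\rho]$ is a fixed number against which $S[\rho_1(t)]$ is compared; for the partially synchronized case this amounts to passing to the co-rotating frame in which the profile is steady and its entropy constant. One should also check that the competitor supplied by Lemma~\ref{neigh} genuinely lands in the stability ball and that Eq.~(\ref{entropyderiv}) may be applied along the entire evolved trajectory---both points following from the construction in Lemma~\ref{neigh} (the competitor is itself produced by a brief propagation of $\rho$) and from the smoothness-preserving nature of the $D = 0$ transport equation.
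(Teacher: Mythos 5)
Your argument is correct and is essentially identical to the paper's own proof: both derive a contradiction by taking the lower-entropy competitor from Lemma~\ref{neigh}, noting that Eq.~(\ref{entropyderiv}) forbids the entropy from increasing back up to $S[\rho]$, while Lemma~\ref{key} and the assumed convergence in the supremum norm would force exactly that. Your added remarks on the stationarity of $S[\rho]$ along the reference trajectory and on the applicability of Eq.~(\ref{entropyderiv}) along the evolved trajectory are sensible refinements of points the paper leaves implicit.
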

\begin{proof}
If there was a stable neighborhood around $\rho$, according to Lemma ($\ref{neigh}$) we would be able to find a state $\rho_1$ within this neighborhood such that $S[\rho_1] < S[\rho]$. Furthermore, because of Lemma ($\ref{key}$) we must have that $S[\rho_1] \to S[\rho]$ as $t \to \infty$. But that would imply increasing entropy which is impossible according to Eq. ($\ref{entropyderiv}$). 
\end{proof}
As a final remark, it would seem obvious that Lemma ($\ref{neigh}$), and hence also the rest of the arguments, are valid also for incoherent states ($r = 0$) that are not identical to the uniform incoherent state $\rho_0$. However, we omit the details here.

\section{Entropy of the mean density}\label{sec_mean}

In Section ($\ref{sec_entropy}$) we expressed the time-evolution of the entropy as a function of $r$ but from this little information can be extracted as to how $r$ itself evolves with time. This was one of the main questions posed by Kuramoto. The exponential convergence of the order parameter for certain initial conditions has been proven for example in Ref. \cite{Ott2008}. At first, it might seem counter intuitive that the order parameter converges whilst the density function $\rho$ never finds a stationary state. However, if we consider instead the density function integrated over all natural frequencies
\begin{equation}
\bar{\rho}(\theta) = \int_{-\infty}^{\infty} \rho(\omega, \theta) g(\omega)\textrm{d}\omega,
\end{equation}
henceforth referred to as the \emph{mean density}, the perspective might become more intuitively appealing. We define the entropy of the mean density as follows:
\begin{equation}\label{entropy_mean}
S[\bar{\rho}] = -\int_{-\pi}^{\pi} \bar{\rho}\ln(\bar{\rho}) \textrm{d}\theta. 
\end{equation}
This would be the natural definition of entropy for an observer that is ignorant about the natural frequencies of the oscillators, but more importantly, the mean density function contains sufficient information to calculate $r$:
\begin{equation}
re^{i\psi} = \int_{-\pi}^{\pi} \bar{\rho}(\theta) \textrm{d}\theta.
\end{equation}
Because of the nonlinearity of the logarithm, in general we do not have that $S[\bar{\rho}] = S[\rho]$, but instead we obtain for the time derivative of $S[\bar{\rho}]$ the following expression:
\begin{equation}\label{entropyderiv_mean}
\frac{\mathrm{d}S[\bar{\rho}]}{\mathrm{d}t} = \mathcal{D} + \mathcal{K} + \mathcal{W},
\end{equation}
where we have introduced the notation
\begin{align}
&\mathcal{D} = D\int_{-\pi}^{\pi} \int_{-\infty}^{\infty} \frac{(\rho')^2}{\rho} g(\omega) \mathrm{d}\omega \mathrm{d}\theta, \\
\\
&\mathcal{K} = -Kr^2,\\
\\
&\mathcal{W} = -\int_{-\pi}^{\pi} \int_{-\infty}^{\infty} \omega \rho g(\omega) \mathrm{d}\omega \frac{\bar{\rho}'}{\bar{\rho}} \mathrm{d}\theta.
\end{align}
The difference between Formula ($\ref{entropyderiv}$) and Formula ($\ref{entropyderiv_mean}$) lies entirely in the term $\mathcal{W}$. This tells us that $\bar{\rho}$, unlike $\rho$, even in the absence of diffusion ($D = 0$) can reach a state of stationary entropy provided that
\begin{equation}
\mathcal{K} + \mathcal{W} = 0,
\end{equation} 
which for $K > 0$ would require that $\mathcal{W}$ is positive. We will argue that it is in fact very plausible that $\mathcal{W}$ is positive for a partially synchronized state. In order to see this, we rewrite it in the following form:
\begin{equation}
\mathcal{W} = -\int_{-\pi}^{\pi} \langle \omega \rangle_{\theta} \frac{\bar{\rho}'}{\bar{\rho}} \mathrm{d}\theta,
\end{equation}
where we have identified the factor $\langle \omega \rangle_{\theta}$ as the average natural frequency $\omega$ at angle $\theta$. If we imagine a typical situation we would expect that part of the oscillators with positive frequency become frequency-locked at an angle greater than $\psi$, and vice verca for the oscillators with negative frequency. Moreover, as we move away from $\theta = \psi$ we would expect a declining average density, which implies a negative derivative $\bar{\rho}'$ for positive natural frequencies and a positive derivative $\bar{\rho}'$ for negative natural frequencies. In total, these considerations suggest that $\mathcal{W}$ is indeed likely to be positive for partially synchronized states. Hence, unlike the situation with the total density function $\rho$, entropy considerations pose no a-priori obstacles for $\bar{\rho}$ to converge to a stationary state with an order parameter anywhere in the range $0 \leq r < 1$.

\bibliographystyle{apsrev}
\bibliography{referenser}

\end{document}